\pdfoutput=1
\documentclass[12pt]{article}

\usepackage[utf8]{inputenc}
\usepackage[T1]{fontenc}
\usepackage{lmodern}
\usepackage[margin=1in]{geometry}
\usepackage{amsmath,amssymb,amsthm,mathtools}
\usepackage{booktabs,makecell,array}
\usepackage{setspace}
\usepackage{float}
\usepackage{graphicx}
\usepackage{tikz}
\usetikzlibrary{arrows.meta,positioning,calc}
\usepackage[ruled,vlined,linesnumbered]{algorithm2e}
\usepackage{xspace}
\usepackage{natbib}
\usepackage{xcolor}
\usepackage{url}
\usepackage[colorlinks=true,linkcolor=blue,citecolor=blue,urlcolor=blue]{hyperref}

\SetKwInOut{Input}{Input}
\SetKwInOut{Output}{Output}

\providecommand{\g}{}
\renewcommand{\g}{\ifmmode T\else\ensuremath{T}\xspace\fi}       \providecommand{\m}{}
\renewcommand{\m}{\ifmmode M\else\ensuremath{M}\xspace\fi}         \newcommand{\tm}{TradeMech}
\newcommand{\mo}{MO}
\newcommand{\ic}{initial contract}
\newcommand{\mpc}{multiparty contract}

\let\cite\citeyear

\newtheorem{proposition}{Proposition}
\newtheorem{lemma}{Lemma}
\newtheorem*{lemma*}{Lemma}
\newtheorem*{lemma**}{Lemma}
\newtheorem*{theorem***}{Theorem}
\newtheorem*{definition*}{Definition}
\newtheorem*{definition***}{Definition}

\newenvironment{boxH}{\begin{center}\begin{minipage}{0.94\textwidth}\hrule\vspace{0.65em}}{\vspace{0.65em}\hrule\end{minipage}\end{center}}

\tikzset{
  tmnode/.style={circle, draw, minimum size=6mm, inner sep=1pt, font=\small},
  tmsmall/.style={circle, draw, minimum size=5.5mm, inner sep=1pt, font=\scriptsize},
  tmedge/.style={-{Latex[length=2mm]}, thick},
  tmlabel/.style={font=\scriptsize, fill=white, inner sep=1pt}
}

\newcommand{\InitialFlowNetworkFigure}{\includegraphics[width=0.72\textwidth]{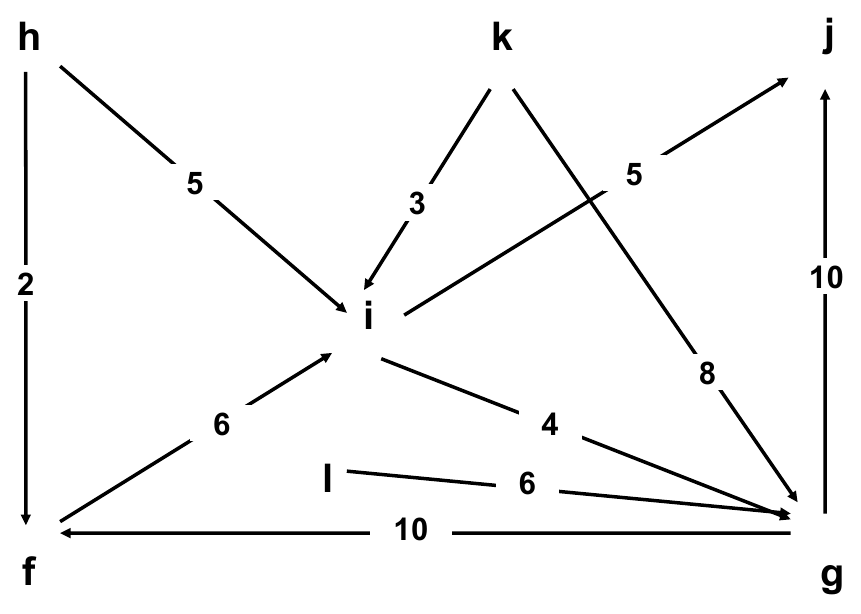}}

\newcommand{\SplitNodeGFigure}{\includegraphics[width=0.68\textwidth]{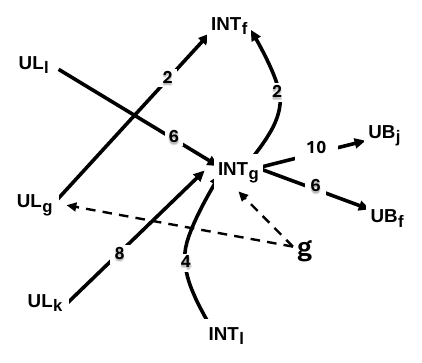}}

\newcommand{\SplitNodeFFigure}{\includegraphics[width=0.68\textwidth]{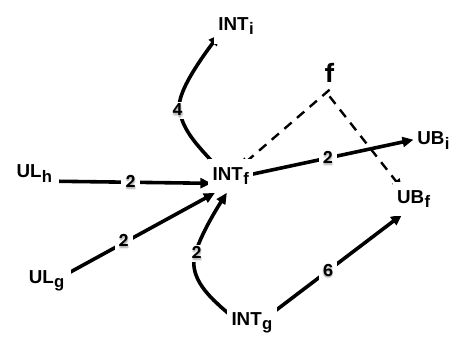}}

\newcommand{\TradeFlowNetworkFigure}{\includegraphics[width=0.88\textwidth]{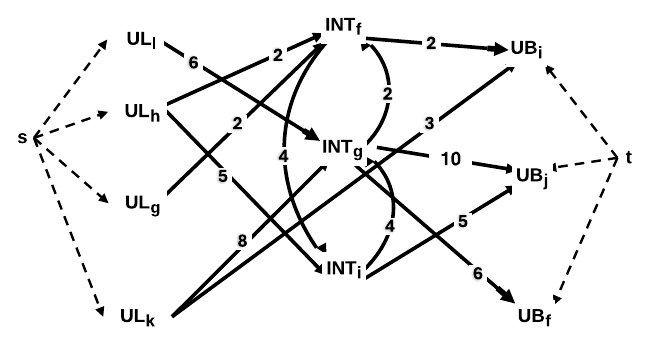}}

\title{TradeMech: A Method to Multilaterally Net Trades Without Altering Counterparty Exposure}
\author{Daniel Aronoff\thanks{Research Affiliate, MIT Department of Economics}\and Robert Townsend\thanks{Elizabeth and James Killian Professor of Economics, MIT Department of Economics}\and Madars Virza\thanks{Chief Scientist, Radius}}
\date{}

\begin{document}

\maketitle

\pagenumbering{roman}
\begin{abstract}
Financial markets such as bond, derivatives, and repo markets form networks of interdependent obligations. Existing multilateral netting methods typically trade off the extent of netting against preservation of counterparty exposure: central clearing reallocates exposure to a central counterparty, while trade compression may alter bilateral counterparty relationships. We present TradeMech, a mechanism for markets in which one or two homogeneous fungible objects are traded. The mechanism transforms a network of initial bilateral contracts into chains and cycles, nets the designated object multilaterally on those chains and cycles, and replaces initial contracts with multiparty contracts whose assigned trades remain fractions of the original bilateral trades. The construction achieves maximal multilateral netting of the designated object while preserving each agent's contractual profit and preserving the location of counterparty risk. When a party fails to pre-commit a required object, the affected assigned trade is recovered as a bilateral contract between the same original counterparties and the remaining assigned trades are re-netted on residual chains, so no new counterparty exposure is created.
 \end{abstract}

\pagebreak
\tableofcontents
\pagebreak

\newpage
\pagenumbering{arabic}
\onehalfspacing

\section{Introduction}

Financial markets, such as markets for bonds, derivatives, and repo, form a network of interdependent obligations. To reduce settlement complexity and counterparty exposure, various netting techniques are used to consolidate multiple offsetting obligations. Existing methods present a tradeoff between the extent of netting and the preservation of counterparty risk---the risk of counterparty default---in the trade agreements initially entered into by counterparties. More of one implies less of the other. We present a mechanism, applicable to markets in which one or two homogeneous fungible objects are traded, that achieves maximal multilateral netting of the designated netted object without altering the location of counterparty risk. The mechanism transforms the network formed by the initial contracts into a set of chains and cycles on which traded objects flow along edges between nodes. The initial contracts are terminated and replaced by multiparty contracts that net trades on each chain and cycle while preserving each agent's contractual profit. When a node fails to pre-commit the objects it is required to send, the affected trade assignment is removed from the chain or cycle, recovered as a bilateral contract between the same original counterparties, and the remaining trade assignments are re-netted on residual chains. No new counterparty exposures are created and no initial counterparty exposures are lost. Thus, the mechanism has three key properties: it implements an ex ante algorithmic rearrangement of contracts, achieves maximal multilateral netting, and preserves the location of counterparty risk.

\subsection{Alternative trading mechanisms}
The two prevalent methods which have been implemented in financial markets that achieve multilateral netting are central clearing and trade compression.
Central clearing is used, inter alia, in the US Treasuries secondary and repo markets (SEC \cite{SEC-UST-CCP-Rule-2023}) and in OTC markets for credit default swaps, interest rate derivatives and FX derivatives \citep{BIS_OTCStats_2023,BIStradecompression2019}. Trade compression is used, inter alia, in OTC markets for interest rate derivatives, FX and equities (IOSCO \cite{IOSCO_PTRRS_2024}).

\textbf{Central clearing} In central clearing, an entity interposes itself between trade counterparties -- becoming the buyer to every seller and the seller to every buyer (the ''CCP''). This transforms a network of bilateral exposures into a hub-and-spoke model where each participant only has exposure to the CCP. For example, if $i$ owes $j$ \$1 and $j$ owes $k$ \$1, central clearing would interpose the CCP would eliminate $j$'s involvement and interpose between $i$ and $k$. $i$ wold owe the CCP \$1 and the CCP would owe \$1 to $k$. The CCP induces maximal multilateral netting, but counterparty credit risk has shifted to the CCP. Moreover, the CCP may add cost to the trades. A CCP mitigates bilateral default risk through robust margining and default fund mechanisms. Losses are absorbed by the CCP (and potentially mutualized among members) rather than by the originally contracting counterparty. This concentrates risk in the CCP and severs the direct risk link between original trading partners as agents no longer manage credit risk bilaterally and must instead rely on the CCP's solvency and risk sharing arrangements. 

\textbf{Trade compression} Trade compression is a form of multilateral netting often used in over-the-counter (OTC) derivatives markets to tear up redundant contracts. Services like TriOptima's TriReduce allow multiple counterparties to collectively cancel offsetting trades, reducing gross notional outstanding. In such multilateral compression cycles, participants contribute their portfolios into an algorithm that identifies offsetting positions and net flows. For example, if $i$ owes $j$ \$1 and $j$ owes $k$ \$1, trade compression would eliminate $j$'s participation and create a direct obligation from $i$ to $k$. This would transfer $k$'s counterparty credit risk from $i$ to $j$. Moreover, since the acceptance of the new arrangement requires the approval of $i$ and $k$, there is no guarantee that trade compression would achieve maximal netting of trades.\footnote{``[C]ompression cycles are not always fully efficient in offsetting all possible positions.'' BIS \cite{BIStradecompression2019}.} Therefore, trade compression neither preserves counterparty credit risk nor achieves maximal multilateral netting.\footnote{ A recent study explores applying trade compression techniques to offset trade credits (Boissay et.al. \cite{tradecredit}).}

\subsection{Related literature}

Our study relates to the literature on market mechanisms that achieve multilateral netting of trades. One strand studies central clearing and the tradeoff between multilateral netting benefits and the reallocation of counterparty exposure. Duffie and Zhu \cite{Duffie-Zhu} provide the foundational analysis of whether central clearing reduces counterparty risk, while Duffie \cite{Duffie2017,Duffie2020} Group of Thirty \cite{Group-of-30,G30-2} and the SEC \cite{SEC-UST-CCP-Rule-2023} study or implement central-clearing reforms in U.S. Treasury markets. A second strand studies post-trade compression and related post-trade risk reduction services in OTC markets. Ehlers and Hardy \cite{BIStradecompression2019} describe the evolution of OTC derivatives market structure, including compression and clearing, and IOSCO \cite{IOSCO_PTRRS_2024} reviews post-trade risk reduction services and sound practices. Boissay et al. \cite{tradecredit} analyze multilateral trade-credit setoff, which is adjacent to our setting. We contribute to this literature by proposing a mechanism that achieves maximal multilateral netting without reallocating counterparty exposure.

\subsection{Roadmap}

In section \ref{sec:Assignment of Trades to Chains and Cycles} we describe a protocol -- composes of a sequence of algorithms -- that decomposes the aggregate of initial bilateral contracts for the exchange of two homogeneous objects into multiparty contracts on chains and cycles. Proposition 1 demonstrates that first result; that the protocol achieves maximal multilateral netting and minimizes the net expenditure of money. In Section \ref{sec:preserving_counterparty_risk} we explain how the mechanism preserves initial counterparty risk by removing the trade assignment of a party that fails to pre-commit the objects it is required to send on a chain and cycle and re-netting the remaining trade assignments. Proposition 2 demonstrates that TradeMech preserves profit and counterparty risk. Algorithm 3 summarizes the protocol for iterating over chains and cycles to find the executable netting contracts. Section \ref{sec:Single_Traded_Object} extends the protocol to the case of single object. We conclude with a brief overview of the mechanism. 

 \section{Assignment of Trades to Chains and Cycles}
\label{sec:Assignment of Trades to Chains and Cycles}

TradeMech terminates initial  contracts, re-allocates trades to chains and cycles and implements replacement contracts on the chains and cycles.  In this section we describe the protocol for assigning initial  contract trades to chains and cycles. Each initial   contract between two agents involves an exchange of one object, which we denote \m, for another object, \g. This imposes a choice as to which object to net. WLOG  we net \g. The protocol decomposes the flow of \g between agents, represented as nodes, into a set of chains and cycles. It then  re-attaches the initial contract \m flows to the \g flows between pairs of nodes. 

The reshaping of  trade flows is accomplished in five steps. The first step nets trades between initial contract counterparties. It generates a graph where nodes represent agents and directed edges represent initial  contract net \g flows between agents (Section \ref{subsec: Network structure formed by initial trade contracts}). The second step splits an agent's node into two; one representing its excess inflow or outflow of \g and the other representing the remaining matched-trade equal inflow and outflow. We call these nodes "children" of the agents (Section \ref{subsec: Separate excess trade flows and matched-trade trade flows}). The third step transforms the graph into a flow network (Section \ref{subsec: Transform T flows into a flow network}). The fourth step decomposes the flow network into a set of chains and cycles, with nodes representing agents and directed edges representing initial  contract \g flows. We call these nodes the "grandchildren" of the agents (Section \ref{subsec: Transform the TFN into chains and cycles}). The fifth step re-attaches the flows of \m from the initial  contracts to the chains and cycles. The edges are bi-directional, which \m flowing in one direction and \g flowing in the other direction (Section \ref{subsec:Attaching  M to chains and cycles}).

\subsection{Network structure formed by initial trade contracts}
\label{subsec: Network structure formed by initial trade contracts}

Table \ref{tab:Initial Contracts} displays initial contracts between agents. Each row is an initial contract trade. The first column is the initial contract number. The second column is the agent that sends \g (and receives \m). The third column is the counterparty agent that sends \m (and receives \g). The fourth column is the unit price of \g in terms of \m. The fifth column is the traded units of \g.

\begin{table}[h!]
\centering
\caption{Initial Contracts}
\label{tab:Initial Contracts}
\begin{tabular}{lllll}
\toprule
\makecell{\textbf{Contract}\\\textbf{Number}} &\makecell{\textbf{Firm}\\\textbf{ Id}} & \makecell{\textbf{Counter -}\\\textbf{ party Id}} & \makecell{\textbf{Unit Price}\\\textbf{ \g for \m}} & \makecell{\textbf{Units of \g}}\\
\midrule
1 & h & i & 5.25 & 5  \\
2 & k & i & 6.3  & 3  \\
3 & i & j & 6.55 & 5  \\
4 & i & g  & 3    & 4  \\
5 & g & j & 5.95 & 10 \\
6 & l & g & 5.95 & 6  \\
7 & h & f & 3.3  & 10 \\
8 & f & h & 3.1  & 8  \\
9 & k & g & 3.77 & 8  \\
10 & g & f & 6.53 & 10 \\
11 & f & i & 5.12 & 6  \\
\bottomrule
\end{tabular}
\end{table}

After the initial contracts are executed, trades between  agents are netted on object \g. The netting works as follows. In contract \#7 agent $h$ sends 10 units of \g to agent $f$. In contract \#8 agent $f$ sends 8 units of \g to agent $h$. In the net trade agent $h$ sends agent $f$ 2 units of \g. The total \m paid by $f$ in contract \#7 is 30.3. The total \m paid by $h$ in contract \#8 is 24.8. The net price paid by $f$ to $h$ is 5.5. The net unit price is 2.75. Figure \ref{fig:Initial S Flow Network} depicts the network of \g flow formed by the netted initial contracts. Nodes represent agents and the numbers on the directed edges represent the units of \g flowing between the agents.\footnote{ We sometimes represent the flow from e.g. $h$ to $f$ by  $2_{h\to f}$ or, abstractly as $\g_{h\to f}$.}

\begin{figure}[H]
\begin{center}
\InitialFlowNetworkFigure
\end{center}
\caption{Initial  \g - Flow Network}
\label{fig:Initial S Flow Network}
\end{figure}

\subsection{Separate excess trade flows and matched-trade trade flows}         
\label{subsec: Separate excess trade flows and matched-trade trade flows}

The second step divides each node into at most two nodes; one node has equal inflow and outflow of \g (the "balanced node" or "matched-trade node") and the other node has the excess of inflow or outflow (the "excess flow node"), if any. Figure \ref{fig:Splitting node g with a net outflow of 2} shows the node split for agent $g$, which has a net outflow of 2. The balanced trade (''$BT$'') node is placed in the middle of the graph and is labeled $BT_{g}$ and an  excess flow node is placed on the left side of the graph and labeled $NS_{g}$. Figure \ref{fig:Splitting node f with a net inflow of 6} shows the node split for agent  $f$, which has a net inflow of 6. The balanced flow is placed in the middle of the graph and is labeled $BT_{f}$ and the excess flow node is placed on the right side of the graph and labeled $NR_{f}$. $g$ and $f$ are the "parent" nodes and the balanced and excess flow nodes into which it is divided are called the "child" nodes. A key property of the node split is that the flows between child nodes of $f$ and $g$ are the same as flow between the parent $f$ and $g$ in their netted initial contracts. 

Two notable feature are; first, the trade pattern for the children of $f$ and $g$ is unaffected by the order in which the splitting occurs. Second, the lowest value trades, in terms of the unit price of \g in exchange for \m, were moved to the excess flow nodes. This can be read from Table \ref{tab:Initial Contracts}.\footnote{More generally, any criteria for selecting the trade flows assigned to the excess trade nodes would work.} These features are formally stated below.

\begin{itemize}
\item The order in which the parent nodes are split does not affect a node's excess \g flow. For example, $g$ sends $f$ 4 units of \g in the initial \g flow network \ref{fig:Initial S Flow Network}. The node splitting algorithm assigned 2 units of the flow to $NS_{g}$ and 2 units of the flow to $BT_{G}$ (Figure \ref{fig:Splitting node g with a net outflow of 2}. 

\item A node's excess \m flow is affected by the selection of \g flows it re-routes to its excess flow child node. Algorithm \ref{algo:node-splitting} minimizes excess \m flows by re-routing \g flows in ascending order of the size of the associated \m flow. 

\end{itemize}

\begin{figure}[H]
\begin{center}
\SplitNodeGFigure
\end{center}
\caption{Splitting node $g$ with a net outflow of 2}
\label{fig:Splitting node g with a net outflow of 2}
\end{figure}

\begin{figure}[H]
\begin{center}
\SplitNodeFFigure
\end{center}
\caption{Splitting node $f$ with a net inflow of 6}
\label{fig:Splitting node f with a net inflow of 6}
\end{figure}

Algorithm 1 states the protocol for splitting nodes.

\begin{algorithm}[H]
\caption{Node Splitting Algorithm}
\label{algo:node-splitting}
\Input{Directed trade-flow graph $R=(A,\mathcal{E})$ with edge flows $f:\mathcal{E}\to\mathbb{R}_{\ge 0}$ and unit prices $p:\mathcal{E}\to\mathbb{R}_{\ge 0}$; node $i\in A$ with net imbalance $\gamma_i$ and magnitude $E \gets |\gamma_i|>0$}
\Output{Updated graph $R'$ in which $i$ is replaced by $BT_{X}$ and either $NR_{X}$ (outflow case) or $NS_{X}$ (inflow case)}
$R' \gets R$\;

\If{$\gamma_i > 0$}{ \tcp{Net outflow $E$ at node $i$}
  $S,\ (x_e)_{e\in S} \gets \operatorname{SelectOutflowByPrice}(i,E,p)$ \tcp{choose $E$ units from outgoing edges in ascending unit price; $\sum_{e\in S} x_e = E$}
  \ForEach{$e \in S$}{
    $f[e] \gets f[e] - x_e$\;
  }
  $NR_{X} \gets \operatorname{NewNode}()$\;
  $R' \gets \operatorname{ReassignOutflow}(R', i, NR_{X}, \{(e,x_e)\}_{e\in S})$\;
  $R' \gets \operatorname{Relabel}(R', i, BT_{X})$\;
}
\ElseIf{$\gamma_i < 0$}{ \tcp{Net inflow $E$ at node $i$}
  $S,\ (x_e)_{e\in S} \gets \operatorname{SelectInflowByPrice}(i,E,p)$ \tcp{choose $E$ units from incoming edges in ascending unit price; $\sum_{e\in S} x_e = E$}
  \ForEach{$e \in S$}{
    $f[e] \gets f[e] - x_e$\;
  }
  $NS_{X} \gets \operatorname{NewNode}()$\;
  $R' \gets \operatorname{ReassignInflow}(R', i, NS_{X}, \{(e,x_e)\}_{e\in S})$\;
  $R' \gets \operatorname{Relabel}(R', i, BT_{X})$\;
}
\Else{
  \tcp{No split needed when $\gamma_i=0$}
}
\Return{$R'$}\;
\end{algorithm}

\subsection{Transform  \textit{T}-flows into a flow network}
\label{subsec: Transform  T flows into a flow network}

The third step is to add nodes $s$ and $t$ to the left and right ends of the graph. Node $s$ is placed on the left end of the graph and has a fictitious volume of \g with each $NS$ node whereby each $NS$ node can receive an unlimited inflow of \g from $s$.  Node $t$ s placed on the right end of the graph and has a fictitious volume of \g with each $NR$ node whereby each $NR$ node can send an unlimited outflow of \g to $t$. The result is a  directed graph where the flow of \g moves from left to right, which we call the trade flow network ("TFN"). Figure \ref{fig:Trade Flow Network} is the flow network for the market transactions described on Table \ref{tab:Initial Contracts}. The weighted edges connecting to $s$ and $t$ enable the identification of chains with a fixed volume of \g flowing from an $NR$ node to an $NS$ node, but are not otherwise relevant.  The $NS$ nodes send \g (and only receive \g from $s$); the $BT$ nodes receive and send an equal amount of \g and the $NS$ nodes receive \g (and only send \g to $t$).

\begin{figure}[H]
\begin{center}
\TradeFlowNetworkFigure
\end{center}
\caption{Trade Flow Network (''TFN'')}
\label{fig:Trade Flow Network}
\end{figure}

\subsubsection{Computational complexity of creating the TFN} 

The initial \g - flow network (Figure \ref{fig:Initial S Flow Network}) is transformed into the TFN in three steps.  First, the net flow of a chosen node  is computed. Second, the flow to and from the $V-1$ other nodes are sorted (denoting $V$ the number of nodes in the initial \g - flow network). Third, flow is reallocated and edges are attached to the new $NR$ or $NS$ node. Equation \ref{eq:Comp cmplexity of TFN} is the upper bound complexity to split a node.

\begin{equation}
\label{eq:Comp cmplexity of TFN}
\underset{\text{inflow-outflow}}{\underbrace{1}} + \underset{\text{sorting}}{\underbrace{ \mathcal{O}(V - 1)\text{log}(V - 1))}} + \underset{\text{new node and edge}}{\underbrace{ \mathcal{O}(1)}}
\end{equation}

Summing over $V$ nodes we obtain complexity $\mathcal{O}(V^{2}\text{log} V)$.\footnote{The result uses that (i) when $V$ is large $(V+1)\sim V$ and (ii) $\mathcal{O}(V - 1)\text{log}(V - 1)$ dominates $\mathcal{O}(V^{2})$ and $\mathcal{O}(V)$.}

\subsubsection{Properties of the TFN}
\label{subsec: The Maximum Flow problem}

The flows of \g in the TFN form a network $R = (A,E)$ where nodes $u, v\in A = \{1,...,A\}$ denote agents; directed edges $(u,v) \in E = \{1,...,E\}$ have capacity equal to the volume $\g_{u\rightarrow v}$ of  \g between nodes $u$ and $v$ that is allocated to the edge.\footnote{We denote nodes as $\{u, v\}$ rather than $\{i,j\}$ to underscore that nodes in the TFN are not the same as nodes in the initial  \g-flow graph \ref{fig:Initial S Flow Network}}. A $flow$ in $R$ is a real valued function $f:\;A\times A \rightarrow \mathbb{R}$, which represents the flow of \g between a pair of nodes that are connected by an edge. A flow network satisfies the following two properties:\footnote{See Cormen et.al. \cite{cormen-introduction} Chapter 26.1.}

\textit{Flow Conditions}

\textbf{Capacity constraint}: For all $u, v \in A$, $0 \leq f(u, v) \leq$  $\g_{u\rightarrow v}$

\textbf{Skew symmetry}: For all $u, v \in A$, $f(u, v) = -f(v, u)$.

\textbf{Flow conservation}: For all $u \in A - \{s,t\}$,

$\underset{\text{inflows of \g}}{\underbrace{\sum_{u\in A}f(u, v)}} = \underset{\text{outflows of \g}}{\underbrace{\sum_{u\in A}f(v,u)}}$

When $(u,v) \notin E$, $f(u,v) = 0$

It is immediate that the TFN is a flow network. Noting that node $s$ sends flow to nodes labeled $NR$, which in turn send flow to nodes labeled $BT$, the measurement of network \textit{flow} $|f|$ is the volume of \g sent from $NR$ nodes. The formula for network flow is the following.

$|f| = \sum_{u\in A}f(s,u) = \sum_{NS_{i}\in A}\sum_{BT_{j}\in A}f(NS_{i},BT_{j})$

A \textit{cut} $(\mathbb{S},\mathbb{T})$ of the TFN is a partition of nodes into two sets; $S$ and $\mathbb{T} = A - \mathbb{S}$ such that $s\in \mathbb{S}$ and $t\in \mathbb{T}$. The flow $f(\mathbb{S},\mathbb{T})$ across the cut $(\mathbb{S},\mathbb{T})$ is defined to be

$f(\mathbb{S},\mathbb{T}) = \sum_{u\in\mathbb{S}}\sum_{v\in \mathbb{T}}f(u,v)$

Since the outflow of the $NS$ nodes to $BT$ nodes equals the inflow received from $s$, the network flow in the TFN is equal to the volume of \g flowing across the cut that has $s$ and the $NS$ nodes in $\mathbb{S}$ and all other nodes in $\mathbb{T}$. Cormen et.al. \cite{cormen-introduction} show that the volume of flow across any cut in a flow network is equal to the network flow.

\begin{lemma*}
Let $f$ be a flow in a flow network $R$ with source $s$ and sink $t$, and let $(\mathbb{S}, \mathbb{T})$ be any cut of $R$. The flow across $(\mathbb{S},\mathbb{T})$ is $f(\mathbb{S},\mathbb{T}) = |f|$.
\end{lemma*}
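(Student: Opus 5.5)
The plan is to follow the standard argument from Cormen et al., using only the three flow conditions stated above (capacity, skew symmetry, flow conservation). The capacity constraint will not be needed. For sets $X,Y \subseteq A$ write $f(X,Y)=\sum_{u\in X}\sum_{v\in Y} f(u,v)$. First I record three elementary identities, each immediate from the definition as a finite double sum.

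\begin{itemize}
\item[(i)] $f(X,X)=0$. Pair the terms $f(u,v)$ and $f(v,u)$; by skew symmetry these cancel. The diagonal terms vanish because $f(u,u)=-f(u,u)$.
\item[(ii)] $f(X,Y)=-f(Y,X)$, again by skew symmetry applied termwise.
\item[(iii)] If $X\cap Y=\emptyset$, then $f(X\cup Y,Z)=f(X,Z)+f(Y,Z)$, and the analogous identity holds in the second argument, by splitting the outer or inner sum.
\end{itemize}

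Next I rewrite flow conservation in net form. In the skew-symmetric convention, conservation at $u\in A-\{s,t\}$ says $\sum_{v\in A} f(u,v)=0$, that is, $f(\{u\},A)=0$. Summing over $u\in\mathbb{S}-\{s\}$ with (iii) gives $f(\mathbb{S}-\{s\},A)=0$. This uses $t\notin\mathbb{S}$, which holds by the definition of a cut. Likewise $|f|=f(\{s\},A)$.

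The main computation is then a chain of equalities. Since $\mathbb{T}=A-\mathbb{S}$, identity (iii) gives
\[
f(\mathbb{S},\mathbb{T})=f(\mathbb{S},A)-f(\mathbb{S},\mathbb{S})=f(\mathbb{S},A),
\]
using (i) for the second step. Splitting $\mathbb{S}=\{s\}\cup(\mathbb{S}-\{s\})$ and applying (iii) once more,
\[
f(\mathbb{S},A)=f(\{s\},A)+f(\mathbb{S}-\{s\},A)=|f|+0.
\]
This proves $f(\mathbb{S},\mathbb{T})=|f|$.

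The only delicate step is reconciling the paper's formulation with this convention. The paper writes conservation as ``inflows equal outflows'' with sums over $u$. Under skew symmetry, $\sum_{v} f(v,u)=-\sum_v f(u,v)$, so the stated equality is equivalent to the net-flow form $\sum_v f(u,v)=0$ that the argument needs. I will state this equivalence explicitly before the main computation.

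A further point concerns what $|f|$ means. If $|f|$ is read via $f(s,\cdot)$ as the total outflow of $s$, it coincides with the net flow out of $s$ only when $s$ has no incoming edges, because $f(u,s)=0$ whenever $(u,s)\notin E$. In the TFN the source $s$ connects only outward to the $NS$ nodes, so this holds. I will note this to justify identifying $\sum_u f(s,u)$ with $f(\{s\},A)$.
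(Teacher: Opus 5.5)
Your argument is correct and is essentially the standard proof the paper defers to (Cormen et al., Lemma 26.4), written in the skew-symmetric convention. That convention fits the paper's own definitions of $f(\mathbb{S},\mathbb{T})$ as a one-directional double sum and of $|f|=\sum_u f(s,u)$ better than the cited edition's nonnegative-flow version, which subtracts reverse flows across the cut and cancels the $\mathbb{S}\times\mathbb{S}$ terms by relabeling rather than by skew symmetry.

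Your closing caveat about edges into $s$ is unnecessary, since in this convention $\sum_u f(s,u)=f(\{s\},A)$ holds by definition. You are right to avoid the capacity constraint, because the paper's condition $0\le f(u,v)$ combined with skew symmetry would force $f\equiv 0$.
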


\begin{proof}
See Cormen et.al. \cite{cormen-introduction} Lemma 26.4, p.722.
\end{proof}

An implication of Lemma 26.4 is that the flow of \g into the $NR$ nodes is equal to the \g outflow from the $NS$ nodes. This can be apprehended by evaluating the flow across $(\mathbb{S},\mathbb{T})$ when $s$ and the $NS$ nodes are assigned to $\mathbb{S}$ and all other nodes are assigned to $\mathbb{T}$.

\subsection{Transform the TFN into chains and cycles}
\label{subsec: Transform the TFN into chains and cycles}

The fourth step transforms the TFN into chains and cycles. This is achieved by an algorithm that assigns each element of the flow of \g to a chain or a cycle which - for our purposes - are defined as follows.

\begin{definition*}
A chain is the portion of a simple $s-t$ path (i.e. a path with no repeating nodes) connecting an $NS$ node to an $NR$ node along which a constant positive volume of \g flows. 
\end{definition*}

\begin{definition***}
A cycle is a path along which a constant positive volume of \g flows where
 one node and only one node has inflow and outflow.
\end{definition***}

Chains can be constructed by an algorithm that satisfies the following two conditions; (i) it identifies simple $s-t$ paths with positive flow in the TFN (equivalent to connecting an $NR$ node to an $NS$ node) and (ii) after removing from the network the flow of all the identified simple paths, there is no simple path that has positive flow on each edge. Cycles can be constructed out of the remaining flow in the network by an algorithm that satisfies the following two conditions; (a) it identifies paths with positive flow  starting at one node and ending at that node and (ii) after removing from the network the flows assigned to cycles, there is no cycle that has positive flow on each edge.\footnote{In the TFN Figure \ref{fig:Trade Flow Network} the path $BT_{j}\rightarrow BT_{k} \rightarrow BT_{i} \rightarrow BT_{j}$ is a cycle.} The Flow Decomposition Lemma in Williamson \cite{Williamson-2019} proves that it is possible, in principle, to assign the entire TFN flow capacity to chains and cycles.

\begin{lemma**}
Given an s-t flow f, there exist flows $f_{1},...,f_{\ell},$ for some $\ell \leq E$ such that $f = \sum_{i=1}^{\ell}|f_{i}|$ and for each $i$, the edges of $f_{i}$ with positive flow form either a simple $s-t$ path or a cycle.
\end{lemma**}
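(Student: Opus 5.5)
We prove the Flow Decomposition Lemma constructively, by induction on the number of edges carrying positive flow. We read the conclusion as $f=\sum_{i=1}^{\ell} f_i$ edgewise, with $|f|=\sum_i |f_i|$. Let $E^+(f)=\{(u,v): f(u,v)>0\}$. If $E^+(f)=\emptyset$, take $\ell=0$. Otherwise we extract one flow $f_1$ supported on a simple $s$--$t$ path or a cycle such that $f-f_1$ is still a flow and $|E^+(f-f_1)|<|E^+(f)|$. Applying the induction hypothesis to $f-f_1$ gives at most $|E^+(f)|-1$ further pieces, so $\ell\le |E^+(f)|\le E$.

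\textbf{Extraction step.} Suppose first that $|f|>0$. Then some edge $(s,u)$ has positive flow. Walk forward from $s$. Whenever we are at a node $v\notin\{s,t\}$ that has positive inflow, flow conservation (the Flow Conditions above) guarantees an outgoing edge with positive flow, and we follow it. The walk therefore continues until it reaches $t$ or revisits a node.
\begin{itemize}
\item If it reaches $t$ without repetition, we have a simple $s$--$t$ path $P$.
\item If it revisits a node, the portion between the two visits is a cycle $C$ of positive-flow edges.
\end{itemize}
If $|f|=0$ but $E^+(f)\neq\emptyset$, start the walk at any positive edge. By conservation the walk never gets stuck, and since $|f|=0$ it can at most return to $s$; in every case it must revisit a node and so produces a cycle.

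Let $\delta=\min_{e\in P\text{ or }C} f(e)>0$, and let $f_1$ equal $\delta$ on the edges of the path or cycle and $0$ elsewhere, with skew symmetry imposed. Then:
\begin{itemize}
\item $f_1$ satisfies flow conservation at every internal node, since each internal node of a path or cycle has one unit of inflow and one of outflow in the structure.
\item $0\le f-f_1\le f$ edgewise, so $f-f_1$ respects the capacity constraints and conservation.
\item At least one edge, the minimizing one, drops to zero flow.
\end{itemize}
In the path case $|f_1|=\delta$; in the cycle case $|f_1|=0$. Hence $|f|=|f_1|+|f-f_1|$.

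\textbf{Main obstacle.} The step needing the most care is showing the forward walk never gets stuck at an internal node. This is exactly where conservation is used. We must also handle the walk returning to $s$ or running into $t$ when $|f|=0$ or when $t$ has positive outflow. In the TFN, $t$ has no outgoing edges and $s$ has no incoming edges, so reaching $t$ ends a path and $s$ cannot be revisited. For a general network, we first cancel any flow on antiparallel pairs, which skew symmetry permits, and then treat a return to $s$ as closing a cycle.

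Finally, the resulting decomposition gives the chains and cycles of the TFN as defined above. Each path piece restricted from its $NS$ node to its $NR$ node is a chain, with constant volume $\delta$. Each cycle piece is a cycle in the sense of the definition.
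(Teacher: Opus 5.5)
Your proof is correct and is essentially the standard constructive argument behind Williamson's Lemma 2.20, which the paper cites rather than proves and which its Algorithm~2 mirrors: repeatedly peel off a simple $s$--$t$ path or a cycle of positive-flow edges at its bottleneck value, so each step zeroes at least one edge and $\ell\le E$ (your single forward walk may extract a cycle before the $s$--$t$ paths are exhausted, whereas Algorithm~2 takes all paths first, but either order gives a valid decomposition). The only loose end is your aside on general networks, which needs two fixes: the statement requires $|f|\ge 0$ (a lone positive flow on an edge $(t,s)$ has no such decomposition), and in the $|f|=0$ case you should say explicitly that $t$ is then balanced, so the walk cannot halt there; both are automatic in the TFN, where $s$ has no incoming and $t$ no outgoing edges.
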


\begin{proof}
See Williamson \cite{Williamson-2019} Lemma 2.20, p.48.
\end{proof}

$NR$ and $NS$ nodes cannot be part of a cycle. This follows from the fact that neither $s$ nor \g can be part of a cycle, since the former does not have inflow capacity and the latter does not have outflow capacity.  Therefore, no $NS$ can be part of a cycle, since it only receives flow from $s$ and no $NR$ can be part of a cycle, since it only sends flow to \g. Combining the observations that the outflow of \g from $NS$ nodes is equal to the inflow of \g to $NR$ nodes and that none of those nodes can be part of a cycle, the Flow Decomposition Lemma implies that there is a (possibly non-unique) set of chains connecting $NR$ nodes to $NS$ nodes that contain the entire outflow capacity of the former and the entire inflow capacity of the latter. The remaining flow that is assigned to cycles that involve \g that flows around the interior of the network.

\subsubsection{An algorithm to construct chains and cycles}
\label{subsec: An algorithm to construct chains and cycles}

There are several algorithms that decompose a flow network into chains and cycles. We provide a high-level representation which encompasses any algorithm that can accomplish this task. The algorithm works as follows. First, a sequence of weighted directed edges connecting nodes in the TFN from $s$ to \g, called an $s-t$ path, along which there is a positive flow on each edge is identified. On each $s-t$ path the flow of  the edge with the smallest flow is subtracted from each edge on the path. The resulting graph is the "remainder graph". On the removed path, the edges connecting to $s$ and $t$ are disregarded and the resultant path connecting an $NR$ on one end to an $NS$ node on the other end is a chain. The process is repeated on the remainder graph until there are no more $s-t$ paths with positive flow. Then cycles with positive flows are identified and removed from the graph in a similar manner as the $s-t$ paths. This is repeated until there are no more cycles with positive flow. The Flow Decomposition Lemma ensures that, at the end of the process, there is no flow on any edge of the remainder graph. All the flow in the TFN has been assigned to chains and cycles, each with a constant level flow of \g across each edge.

\begin{algorithm}[t]
\caption{Chain and Cycle Construction Algorithm (Adaptation of Williamson, 2019, Algorithm 2.3)}
\label{alg:chain-cycle}
\Input{Directed trade-flow graph $R=(A,E)$; edge-flow map $f:E\to\mathbb{R}_{\ge 0}$}
\Output{Set of chains $\mathcal{C}$ and set of cycles $\mathcal{Z}$}
$\mathcal{C} \gets \varnothing$\;
$\mathcal{Z} \gets \varnothing$\;

\While{$\operatorname{ExistsSTPathPosFlow}(R)$}{
  $P \gets \operatorname{GetSTPathPosFlow}(R)$ \tcp{an $s$--$t$ path with positive residual flow}
  $f^\star \gets \operatorname{MinFlow}(P)$\;
  \ForEach{$e \in P$}{
    $f[e] \gets f[e] - f^\star$\;
  }
  $\mathcal{C} \gets \mathcal{C} \cup \{(P,f^\star)\}$\;
  $R \gets \operatorname{RemoveChainFlow}(R, P, f^\star)$\;
}

\While{$\operatorname{ExistsFlow}(R)$}{
  $P \gets \operatorname{GetCycleWithFlow}(R)$ \tcp{a directed cycle with positive flow}
  $c^\star \gets \operatorname{MinFlow}(P)$\;
  \ForEach{$e \in P$}{
    $f[e] \gets f[e] - c^\star$\;
  }
  $\mathcal{Z} \gets \mathcal{Z} \cup \{(P,c^\star)\}$\;
  $R \gets \operatorname{RemoveCycleFlow}(R, P, c^\star)$\;
}

\Return{$(\mathcal{C},\mathcal{Z})$}\;
\end{algorithm}

Each node of a chain or cycle - and its inflow and outflow of \g -  is a partition of a node in the TFN, which itself is a partition of a node in the initial  \g-flow graph, Figure \ref{fig:Initial S Flow Network}, which represent agents who enter into trades. As such, the inflow and outflow of \g for each node on a chain or cycle represents a portion of the total   inflow and outflow of \g between an agent and its initial  contract counterparties. We sometimes denote the nodes in chains and cycles as grandchildren of nodes in the initial  \g-flow graph and sometimes simply denote them as agents.

\subsubsection{Computational complexity of the chain and cycle decomposition}
\label{subsubsec: Computational complexity of the chain and cycle decomposition}

Algorithm 2 implies that the edge with the lowest capacity on an augmenting path $P$ is set to zero and  removed from the graph. This suggests a limit on the number of computations that must be made to construct all chains. Let $U = \text{max}_{(u,v)\in R}f(u,v)$ i.e. the largest capacity on any edge in the TFN ; let $m$ equal the minimum unit size of \g, and $E$ represent edges in the TFN.\footnote{For example, if \g represented US Treasuries, the minimum unit is \$100,000.00 of principal.} The next theorem from Williamson \cite{Williamson-2019} states the bound.
\vskip5pt

\begin{theorem***}
\label{th: chain construction}
The Chain Construction Algorithm can be implemented to run in
\[
\mathcal{O}\bigl(m\log(EU)(A + A\log A)\bigr)
\]
time.
\end{theorem***}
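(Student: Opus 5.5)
The plan is to obtain the bound from the capacity-scaling analysis of augmenting-path algorithms, as in Williamson's treatment. It rests on two observations about Algorithm \ref{alg:chain-cycle} applied to the TFN:
\begin{itemize}
\item each iteration of the first while-loop removes one $s$--$t$ path $P$ carrying $f^\star=\operatorname{MinFlow}(P)$ and zeroes at least one edge;
\item all edge flows are integer multiples of the minimum unit $m$, so after rescaling by $m$ every capacity is an integer in $\{1,\dots,U/m\}$.
\end{itemize}

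The running time factors as (number of path-extraction phases) $\times$ (cost of finding a path in one phase), and I would bound each factor separately.

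\textbf{Step 1: organize the first loop into scaling phases.} Pick paths not arbitrarily but in phases indexed by a threshold $\Delta$. Start with $\Delta$ the largest power of two not exceeding $U$, and halve it after each phase. In the phase with threshold $\Delta$, only $s$--$t$ paths all of whose edges carry flow at least $\Delta$ are extracted. The number of phases is $O(\log(U/m))$. To absorb the edge count I would use the standard refinement that bounds the number of extractions per phase: the total remaining $s$--$t$ flow is at most $2E\Delta$ at the start of a phase, and each extraction removes at least $\Delta$. Over all phases this gives the $\log(EU)$ factor. The remaining factor $m$ in the stated bound I would account for as the per-unit bookkeeping when capacities are measured in units of $m$.

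\textbf{Step 2: bound the cost of one path search.} Finding a positive-flow $s$--$t$ path in the $\Delta$-restricted graph is done by a search whose cost is governed by the node set. The TFN has $O(A)$ nodes; the only added nodes are $s$ and $t$, and each agent contributes at most two children. The chosen path is then subtracted and its saturated edges deleted. With a heap-based selection (a Dijkstra-like widest-path search), this costs $O(A+A\log A)$ per search. Multiplying by the phase count from Step 1 gives the stated expression.

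\textbf{Step 3: termination and correctness.} By the Flow Decomposition Lemma, when no $s$--$t$ path with positive flow remains, all $NS$ outflow has been routed to $NR$ nodes. So the chain phase ends after the counted iterations and produces every chain.

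\textbf{Main obstacle.} The delicate point is Step 1: reconciling the per-phase count of path extractions with the precise form $m\log(EU)$, that is, showing that the number of extractions per phase does not introduce an additional factor of $E$. I would first try to argue that within a phase each extraction saturates an edge that stays below $\Delta$ for the rest of the phase. Failing that, I would fall back on reading the bound directly from Williamson's Theorem with $U$ replaced by the rescaled capacity $U/m$, and verify that the TFN satisfies its hypotheses (integral capacities, single source and sink).
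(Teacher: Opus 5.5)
\textbf{The gap.} The paper gives no argument of its own. Its proof only cites Williamson (2019), Theorem 2.24. That is the running-time bound $O(m\log(mU)(m+n\log n))$ for the maximum-capacity augmenting-path max-flow algorithm, where Williamson's $m$ is the number of arcs and $n$ the number of nodes. In this paper's notation it reads $O(E\log(EU)(E+A\log A))$. The displayed statement keeps the outer $m$ but reinterprets it as the minimum unit size of $T$, and writes the per-search cost as $A+A\log A$. Your proposal tries to derive the displayed expression literally, and its load-bearing steps fail at exactly those places.

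First, a multiplicative factor $m$ cannot be ``per-unit bookkeeping.'' Rescaling by the unit size only replaces $U$ by $U/m$ inside a logarithm. A running time cannot grow linearly with the unit of account: quoting $T$ in cents instead of dollars changes $m$ but not the instance.

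Second, in Step 1 your own count is up to $2E$ extractions per phase over $O(\log(U/m))$ phases, i.e.\ $O(E\log(U/m))$ extractions. Nothing turns this into a bare $\log(EU)$ factor. The saturated-edge idea you propose as a fix only re-derives at most $E$ extractions per phase.

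Third, in Step 2 a Dijkstra-type widest-path search must scan every arc, so it costs $O(E+A\log A)$, not $O(A+A\log A)$. Its cost is not governed by the node set, since the TFN can have $\Theta(A^2)$ edges.

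These steps cannot be repaired, because with $m$ read as the minimum unit size the displayed bound is false. Take $k$ agents that only send $T$ and $k$ agents that only receive it, with each sender trading one unit with each receiver. After node splitting, all flow runs through the $k^2$ edges $NS_a\to NR_b$. Every $s$--$t$ path uses exactly one of them, so Algorithm~2 must output at least $k^2$ chains and takes $\Omega(k^2)$ time. With $m=1$, $U=k$ and $A=O(k)$, the displayed expression is only $O(k\log^2 k)$.

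What is provable is simpler than scaling. Each extraction in Algorithm~2 zeroes at least one edge and no edge ever regains flow, so there are at most $E$ extractions. Finding each by graph search gives $O(E(A+E))$; the standard flow-decomposition bound is $O(AE)$. Choosing widest paths gives $O(E(E+A\log A))$. All of these lie within the correctly transcribed Williamson bound. Your fallback of reading Williamson's theorem with $U$ replaced by $U/m$ would give $O(E\log(EU/m)(E+A\log A))$, not the formula as stated.
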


\begin{proof}
See Williamson \cite{Williamson-2019} Theorem 2.24, p. 50.
\end{proof}

\subsection{Attaching \textit{M} to chains and cycles}
\label{subsec:Attaching  M to chains and cycles}

The first four steps decomposed \g flows into chains and cycles. The fifth and final step is to re-attach the initial  contract \m flows to the chains and cycles. For each pair of agents $i$ and $j$, the flow of  \g (say, from $i$ to $j$) is unaltered. The Node Splitting Algorithm divides the flow between (up to) two nodes for each agent. The Chain and Cycle Construction Algorithm further assigns portions of the flow to different chains and cycles. However, neither of these algorithms alter the total flow of \g from $i$ to $j$. As a consequence it is possible to re-attach the flow of \m associated with \g flows. A unit \m flow from $j$ to $i$, $M_{j\to i}$, is derived by dividing the total initial  \m flow into the number of units of \g flow along the edge connecting $i$ to $j$. The unit price is then multiplied by \g flow from $i$ to $j$ on each chain and cycle where they are neighbors.

The flow of \g and the payment assigned to neighboring nodes on a chain or cycle is a fraction of the flow of \g and payment in the initial  contract between their respective grandparents. The \m and \g flows between any pair of agents $i$ and $j$ that enter into an initial  contract for collateral $\g_{ij}$ with unit price $M_{j\to i}$ are assigned to chains and cycles. $\mathbb{CH} = \{1,...,h,...,H\}$ denotes the set of chains generated by the chain construction algorithm and $\mathbb{CY} = \{1,...,y,...,Y\}$ denotes the set of cycles generated by the chain construction algorithm.  The flow of \g on chain $h$ is $T_{h}$.  $\lambda(ij)_{h}$ is the fraction of $T_{ij}$ assigned to chain $h$, such that $\lambda(ij)_{h}T_{ij} =  T_{h}$. Similarly, the flow of \g on cycle $y$ is $T_{y}$.  $\lambda(ij)_{y}$ is the fraction of $T_{ij}$ assigned to cycle $j$, such that $\lambda(ij)_{y}T_{ij} =  T_{y}$. $\sum_{h = 1}^{h = H}\lambda(h)_{ij} + \sum_{y = 1}^{y = Y}\lambda(y)_{ij} = 1$, where $BT_(h)_{ij} = 0$ and $BT_(y)_{ij} = 0$ for any chain or cycle where no transaction between $i$ and $j$ is assigned. The payment of \m from $j$ to $i$ assigned to chain $h$ is  $T_{h}p^{2}_{j\to i}$ and the volume of \g flowing in the opposite direction, from $i$ to $j$, is $S_{h}$. The same applies to the payment of \m from $j$ to $i$ and the reverse flow of \g assigned to cycle $j$. Figure \ref {fig:Assignment of  Trades to Chains and Cycles} is the mapping of the initial  contract trade flows into chains and cycles.

\begin{figure}[H]
\centering
\[
\underset{\text{initial contract}}{\underbrace{\{T_{ij}, M_{j\to i}\}}}
\;\xrightarrow{\hspace*{1.6cm}}\;
\left\{
\begin{array}{l}
\vdots\\[0.3em]
\lambda(h)_{ij}\cdot T_{ij},\; M_{j\to i}\quad \text{chain } h\\[0.3em]
\vdots\\[0.3em]
\lambda(y)_{ij}\cdot T_{ij},\; M_{j\to i}\quad \text{cycle } j\\[0.3em]
\vdots
\end{array}
\right.
\]
\caption{Assignment of Trades to Chains and Cycles}
\label{fig:Assignment of Trades to Chains and Cycles}
\end{figure}

On each chain the leftmost node is a child of an $NR$ node, the rightmost node is a child of an $NS$ node and the middle nodes are children of matched-trade $BT$ nodes from the TFN. On each cycle all nodes are children of matched-trade $BT$ nodes from the TFN. An agent $i$ can have as many as two grandchildren nodes per chain and at most one per cycle and may have grandchildren on multiple chains and cycles.\footnote{The path corresponding to a chain could include a $BT$ node representing the same agent as either an $NS$ or $NR$ node. Chain 7 below shows agent $g$ appearing as the $NS$ node and a $BT$ node. An agent cannot appear more than once on a cycle since cycles involve $BT$ nodes and only one node is repeated.} Initial  contract counterparties will be neighbors on chains and cycles and the sum of their trade volume on all chains and cycles on which they are neighbors will equal the volume of trade in their initial   contracts. 

A key property of the decomposed  trade network is that the profit of each agent, defined as its net \m inflow, is unaltered from its value in the initial  contracts reflected in the Initial  \textit{\g}- Flow Graph (Figure \ref{fig:Initial S Flow Network}). Equation \ref{eq:Agent $i$ profit} shows that agent $i$'s net \m inflows obtained by summing over all chains and cycles equals its net \m inflows under its initial  contracts. A similar analysis would show the net inflows of collateral \g to be equal.\footnote{$k$ represents all counterparties of $i$.}

\begin{multline}
\label{eq:Agent $i$ profit}
\sum_{h=1}^{H}\Bigg\{ \underset{\text{$i$'s \m inflow on chain $h$}}{\underbrace{\sum_{k:S_{k\to i > 0}}T_{h}M_{i\to k}}} - \underset{\text{$i$'s \m outflow on chain $h$}}{\underbrace{\sum_{k:T_{i\to k >0}}T_{h}M_{k\to i}}}\Bigg\} +\\
\sum_{y=1}^{Y}\Bigg\{ \underset{\text{$i$'s \m inflow on cycle $j$}}{\underbrace{\sum_{k:T_{k\to i > 0}}S_{h}M_{i\to k}}} - \underset{\text{$i$'s \m outflow on cycle $j$}}{\underbrace{\sum_{k:T_{i\to k >0}}S_{y}p^{2}_{k\to i}}}\Bigg\} =\\
\underset{\text{$i$'s profit from initial  trades}}{\underbrace{\sum_{k:T_{k\to i > 0}}T_{ki}M_{i\to k}- \sum_{k: T_{i\to k > 0}}T_{ik}M_{k\to i}}}
\end{multline}

\subsection{Decomposition of the Trade Flow Network into Chains and Cycles}
\label{subsec:Decomposition of the Synthetic Trade Flow Network into Chains and Cycles}

Here we display one admissible decomposition of the TFN into chains and cycles and associate the corresponding \m-flows, expressed in dollars. The decomposition need not be unique. To construct one such decomposition, we apply a depth-first search routine to identify simple $s-t$ paths with positive flow and then, after those path flows are removed, directed cycles with positive flow; this is consistent with the standard depth-first search procedure in Cormen et.al. \cite{cormen-introduction} Sec. 23.3. The existence of a decomposition of the flow into $s-t$ paths and cycles is guaranteed by the Flow Decomposition Lemma in Williamson (\cite{Williamson-2019}Lemma 2.20, Ch. 2). In the decomposition reported below, \g flows from left to right and \m flows from right to left

{\small{
Chain 1: $NS_{k}$ 
\tikz[baseline]{
  \draw[->] (0,0.1) -- (1,0.1) node[midway, above] {3\g};
  \draw[<-] (0,-0.1) -- (1,-0.1) node[midway, below] {18.9\m};} $NR_{i}$

\vspace{1em}

Chain 2: $NS_{K}$ 
\tikz[baseline]{
  \draw[->] (0,0.1) -- (1,0.1) node[midway, above] {8\g};
  \draw[<-] (0,-0.1) -- (1,-0.1) node[midway, below] {30.16\m};} $BT_{g}$ 
\tikz[baseline]{
  \draw[->] (0,0.1) -- (1,0.1) node[midway, above] {8\g};
  \draw[<-] (0,-0.1) -- (1,-0.1) node[midway, below] {47.6\m};} $NR_{j}$

\vspace{1em}

Chain 3: $NS_{l}$ 
\tikz[baseline]{
  \draw[->] (0,0.1) -- (1,0.1) node[midway, above] {2\g};
  \draw[<-] (0,-0.1) -- (1,-0.1) node[midway, below] {11.9\m};} $BT_{g}$ 
\tikz[baseline]{
  \draw[->] (0,0.1) -- (1,0.1) node[midway, above] {2\g};
  \draw[<-] (0,-0.1) -- (1,-0.1) node[midway, below] {11.9\m};} $NR_{j}$

\vspace{1em}

Chain 4: $NS_{l}$ 
\tikz[baseline]{
  \draw[->] (0,0.1) -- (1,0.1) node[midway, above] {4\g};
  \draw[<-] (0,-0.1) -- (1,-0.1) node[midway, below] {23.8\m};} $BT_{g}$ 
\tikz[baseline]{
  \draw[->] (0,0.1) -- (1,0.1) node[midway, above] {4\g};
  \draw[<-] (0,-0.1) -- (1,-0.1) node[midway, below] {26.12\m};} $NR_{f}$

\vspace{1em}

Chain 5: $NS_{h}$ 
\tikz[baseline]{
  \draw[->] (0,0.1) -- (1,0.1) node[midway, above] {5\g};
  \draw[<-] (0,-0.1) -- (1,-0.1) node[midway, below] {26.25\m};} $BT_{i}$ 
\tikz[baseline]{
  \draw[->] (0,0.1) -- (1,0.1) node[midway, above] {5\g};
  \draw[<-] (0,-0.1) -- (1,-0.1) node[midway, below] {32.75\m};} $NR_{j}$

\vspace{1em}

Chain 6: $NS_{h}$ 
\tikz[baseline]{
  \draw[->] (0,0.1) -- (1,0.1) node[midway, above] {2\g};
  \draw[<-] (0,-0.1) -- (1,-0.1) node[midway, below] {8.2\m};} $BT_{f}$ 
\tikz[baseline]{
  \draw[->] (0,0.1) -- (1,0.1) node[midway, above] {2\g};
  \draw[<-] (0,-0.1) -- (1,-0.1) node[midway, below] {10.24\m};} $NR_{i}$

\vspace{1em}

Chain 7: $NS_{g}$ 
\tikz[baseline]{
  \draw[->] (0,0.1) -- (1,0.1) node[midway, above] {2\g};
  \draw[<-] (0,-0.1) -- (1,-0.1) node[midway, below] {13.06\m};} $BT_{f}$ 
\tikz[baseline]{
  \draw[->] (0,0.1) -- (1,0.1) node[midway, above] {2\g};
  \draw[<-] (0,-0.1) -- (1,-0.1) node[midway, below] {10.24\m};} $BT_{i}$ 
\tikz[baseline]{
  \draw[->] (0,0.1) -- (1,0.1) node[midway, above] {2\g};
  \draw[<-] (0,-0.1) -- (1,-0.1) node[midway, below] {6.0\m};} $BT_{g}$ 
\tikz[baseline]{
  \draw[->] (0,0.1) -- (1,0.1) node[midway, above] {2\g};
  \draw[<-] (0,-0.1) -- (1,-0.1) node[midway, below] {13.06\m};} $NR_{f}$
}}

\vspace{2em}
\noindent
Cycle 1:
\begin{center}
\begin{tikzpicture}[>=stealth]

  \node (g) at (0, 0) {$BT_{g}$};
  \node (f) at (3.5, 0) {$BT_{f}$};
  \node (i) at (1.75, 3) {$BT_{i}$};

\draw[<->] (g) -- (f) 
    node[midway, above, sloped] {2\g}
    node[midway, below, sloped] {13.06\m};

\draw[<->] (f) -- (i) 
    node[midway, above, sloped] {2\g}
    node[midway, below, sloped] {10.24\m};

\draw[<->] (i) -- (g) 
    node[midway, above, sloped] {2\g}
    node[midway, below, sloped] {6\m};

\end{tikzpicture}
\end{center}

\subsection{Maximal multilateral netting}
\label{subsec:Maximal multilateral netting}

Proposition states the key properties achieved by the netting protocol.
\vskip5pt
\begin{proposition}[Maximal multilateral netting]
Suppose Algorithm~1 assigns, at each parent node $i$, the $|\gamma_i|$ units of excess $T$-flow to incident trades in ascending order of the associated $M$-per-$T$ price. Then TradeMech achieves maximal multilateral netting of the designated netted object $T$. More precisely:
\begin{enumerate}
    \item after the decomposition into chains and cycles and the formation of replacement contracts, every matched unit of $T$ is netted multilaterally; the only $T$ that remains un-netted is the unavoidable aggregate imbalance represented by the excess-outflow nodes $NS$ and excess-inflow nodes $NR$; and
    \item among all assignments of excess $T$-flow that satisfy part (1), TradeMech minimizes residual $M$, where residual $M$ denotes the $M$-flow attached to the excess-node assignments.
\end{enumerate}
\end{proposition}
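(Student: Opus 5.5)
The proof splits into two parts, matching the two claims of the Proposition.

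\textbf{Part (1).} I will work with the netted initial $\g$-flow graph of Section \ref{subsec: Network structure formed by initial trade contracts}. For each agent $i$, let $\gamma_i$ be its net outflow of $\g$.

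First I establish a lower bound. Under any arrangement that preserves each agent's net $\g$ position, agent $i$ must still deliver, or receive, $|\gamma_i|$ units. So the total un-netted $\g$ can never fall below $\sum_{i:\gamma_i>0}\gamma_i=\sum_{i:\gamma_i<0}|\gamma_i|$. This equality holds because bilateral flows cancel in the sum.

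Next I show that TradeMech attains this bound. By construction of Algorithm~1, each $BT$ child has equal inflow and outflow. The excess is carried only by $NS_i$ and $NR_i$, which have magnitude $|\gamma_i|$. By the Lemma of Cormen et al. applied to the cut $\{s\}\cup\{NS\}$, the total outflow of the $NS$ nodes equals the total inflow of the $NR$ nodes. By the Flow Decomposition Lemma of Williamson, together with the observation that $NS$ and $NR$ nodes cannot lie on cycles, every unit of flow lies either on a chain from some $NS$ to some $NR$, or on a cycle through $BT$ nodes only. On a cycle, every node's $\g$ inflow equals its outflow, so the cycle nets completely. On a chain, every interior $BT$ node nets completely. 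Hence the only unmatched $\g$ is what the endpoints send or receive, which is exactly the $|\gamma_i|$ held at the excess nodes.

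\textbf{Part (2).} The set of assignments satisfying (1) is fixed by the requirement that, at each node $i$, exactly $|\gamma_i|$ units of incident flow in the imbalance direction are rerouted to the excess child. Crucially, the choice at node $i$ only picks which of its own incident edges bear the excess. It does not change $|\gamma_i|$ for any other node, as the paper notes in its first bulleted remark. So the total residual $M$ decomposes as a sum over nodes, and the minimisation separates node by node.

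At a single node, the problem is a fractional knapsack. We choose $x_e\in[0,f(e)]$ with $\sum_e x_e=|\gamma_i|$ to minimise $\sum_e p_e x_e$. The greedy rule in ascending price is optimal by a standard exchange argument. If some cheaper edge $e$ is not saturated while a dearer edge $e'$ has $x_{e'}>0$, then shifting $\epsilon$ units from $e'$ to $e$ changes cost by $\epsilon(p_e-p_{e'})\le 0$. Summing over nodes gives global minimality.

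\textbf{Main obstacle.} The hard step is justifying the separability claim. An edge $(i,j)$ may be partly assigned as excess at both endpoints: as outflow excess for $i$ and as inflow excess for $j$. In that case the two per-node problems share capacity on that edge, and $M$ on it could be double counted. I would handle this by defining residual $M$ as the sum of $M$ attached at each excess node's incident edges. I would then check that Algorithm~1, applied sequentially, still lets each node choose independently from the full edge capacity, as the paper's remark on order-independence indicates. If that fails, I would instead prove optimality via the exchange argument applied to the joint assignment.
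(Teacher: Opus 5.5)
Your proposal is correct and follows essentially the same route as the paper: the cut lemma plus Williamson's flow decomposition, with $NS$/$NR$ nodes excluded from cycles, for part (1); and for part (2), a per-node fractional-knapsack problem solved greedily by price and summed over nodes, with your explicit lower bound and exchange argument spelling out steps the paper only asserts. Your shared-edge worry is settled the way the paper implicitly settles it: residual $M$ is defined as the sum of node-level terms, and Algorithm~1 splits an edge into sub-edges with the same unit price, so an endpoint split later still chooses from the full capacity $T_e$; your check therefore succeeds and the fallback is unnecessary.
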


\begin{proof}
By construction, each balanced node $BT$ in the Trade Flow Network satisfies flow conservation, while each excess-outflow node $NS$ carries only excess net outflow and each excess-inflow node $NR$ carries only excess net inflow. By the equal-flow-across-cuts property, the total flow from $s$ to $t$ equals the aggregate excess outflow and, equivalently, the aggregate excess inflow. By the Flow Decomposition Lemma, every positive flow in the Trade Flow Network can be written as a collection of simple $s$--$t$ paths and cycles. Removing the artificial edges incident to $s$ and $t$ converts each $s$--$t$ path into a chain from an $NS$ node to an $NR$ node, while the remaining cycles lie entirely among $BT$ nodes. It follows that all matched $T$-flow is absorbed into chains and cycles. The only $T$ that remains after multilateral netting is therefore the unavoidable excess represented by the chain end-nodes. Any further reduction of that residual $T$ would alter at least one agent's net $T$-position and would violate flow conservation. This proves part (1).

For part (2), fix a parent node $i$ with imbalance $E = |\gamma_i| > 0$. Any admissible split that preserves part (1) must assign exactly $E$ units of incident $T$-flow to $i$'s excess node. Let $x_e$ denote the amount taken from incident edge $e$, and let $p_e$ denote the associated $M$-per-$T$ price on that edge. The residual $M$ contributed by node $i$ is

\[\sum_e p_e x_e\]

subject to

\[\sum_e x_e = E, \qquad 0 \le x_e \le T_e.\]

This linear minimization problem is solved by selecting units in ascending order of $p_e$, which is exactly the rule imposed by Algorithm~1. Because the magnitude of excess $T$ at each node is fixed by the initial netted trade graph, and because total residual $M$ is the sum of the node-level residual $M$ terms, local minimization at each node implies global minimization. Hence, conditional on maximal netting of $T$, the mechanism minimizes residual $M$.
\end{proof}

 \section{Preserving Counterparty Risk}
\label{sec:preserving_counterparty_risk}

In this section we do two things. First, we describe the formation of multiparty contracts that replace the initial contracts and net trades on chains and cycles (the ``netting groups''). Initial  bi-lateral contracts are terminated, portions of trades are assigned to chains and cycles in accordance with the Chain and Cycle Construction Algorithm, and new multiparty replacement contracts  are assigned that net object flows on the chains and cycles. Then we apply the NetWrap \citep{Aronoffnetwrap} method to agent deficiencies in committing required objects.\footnote{NetWrap \citep{Aronoffnetwrap} is a protocol wrapper that enables multilateral netting while preserving the counterparty-risk and profit of the original trades.} 

NetWrap requires agents to send their objects to an escrow where they are locked until execution of the netting trade. When an agent fails to fully fund its obligation, the deficiency-linked portion of its trade assignments are removed from the netting group (the remainder is locked in escrow) and the removed portions are reinstated as bilateral contracts between the original counterparties. The remaining assigned trades are re-netted, which may require the deficient agent or its assignment counterparty to send additional objects to escrow. The process iterates until there is a netting group for which all required objects are locked in escrow. When that is achieved, the netting group is ``executable'' and objects are released from escrow and distributed to agents. This process of replacement contract, escrow, ejection and re-netting has three key properties.

\begin{enumerate}
\item Only the deficient agents and their counterparties are affected by a default. The netted volume of an agent that is neither deficient, nor a counterparty of a deficient agent, is unaffected. 

\item The method preserves initial counterparty risk exposures. An agent is only exposed to a default by its initial counterparties.

\item When there is a minimum size unit of the traded objects, an executable netting group is determined in a finite number of iterations.\footnote{The intuition is that the minimum size of tradeable object sets an upper bound on the number of iterations for which an object is  deficient. For a formal proof, see \citep{Aronoffnetwrap}}
\end{enumerate}

\subsection{Replacement contracts}

Agents enter into initial contracts. After assigning trades to chains and cycles the contracts are terminated and replaced by \mpc{s} that net object flows on each chain and cycle. For each node, the net flow is computed from its connected edges. It is the outflow minus the inflow. A node sends the object for which it has a net outflow and receives the object for which it has an net inflow. Tables \ref{tab:Net flows on Chain 7} and \ref{tab:Net flows on Cycle 1} display he net flow of objects on a chain and a cycle.

\begin{table}[H]
\centering
\small{
\begin{tabular}{l|c c c c c|c}
\textbf{Object flow} &\textbf{$NS_{g}$} & \textbf{$BT_{f}$}  & \textbf{$BT_{i}$} & \textbf{$BT_{g}$}  & \textbf{$NR_{f}$} & \makecell{Net flow\\ on Chain}\\
\hline
\g - flow   & 2 - out  & &  &   & 2 -in & 0\\
\m - flow   & 13.06 - in & 2.82 - out & 4.24 - out  & 7.06 - in  &  13.06 - out & 0\\              
\end{tabular}
}
\caption{Net flows on Chain 7}
\label{tab:Net flows on Chain 7}
\end{table}
\begin{table}[H]
\centering
\small{
\begin{tabular}{l|c c c|c}
\textbf{Object flow} &\textbf{$BT_{i}$} & \textbf{$BT_{f}$}  & \textbf{$BT_{g}$}& \makecell{Net flow\\ on Cycle}\\
\hline
\g - flow   &   & & & 0  \\
\m - flow   & 4.24 - out &  2.82 - out & 7.06 - in & 0 \\              
\end{tabular}
}
\caption{Net flows on Cycle 1}
\label{tab:Net flows on Cycle 1}
\end{table}

There are several notable properties of the netting of flows on chains and cycles.

\begin{itemize}
    \item Total net inflow equals total net outflow. This follows from the fact that along each edge the outflow from one node is the inflow to the node at the opposite end.

    \item The only sender of \g on a chain is the left end-node and the only receiver is the right end-node. This follows from the fact that the flow of \g is the same on all edges.

    \item The left end-node of a chain sends \g and receives \m. The right end-node sends \m and receives \g. The middle nodes can only send or receive \m. 

    \item There is no \g sent or received on a cycle.
    
\end{itemize}

\textbf{Verification and locking} nodes are required to lock the objects they are required to send. 

\textbf{Deficiencies} Following the NetWrap method, when a node fails to send its required object the trade assignment whereby the offending node sends the object is removed from the chain or cycle and the remaining trade assignments are re-netted, which results in one or more chains.\footnote{See Aronoff \cite{Aronoffnetwrap} for a description of how to limit removal to the amount of the deficiency and how to allocate a deficient agent's objects between netting groups. We here focus on the simplest case where the deficiency is concentrated on a single netting group and the entire trade assignment is removed.} Visually this corresponds to a removal of the edge connecting the offending node to its neighbor on the side toward which the object flows; the left neighbor in the case of failure to send \m and the right neighbor in the case of failure to send \g.  Below is the decomposition of Chain 7 induced by a failure of $BT_{i}$ to send its net obligations of 6.24 units of \m. The underlying trade between $\{BD_{j}, BD_{i}\}$, which is the initial contract trade between $i$ and $j$, scaled to the percentage of \g assigned to Chain 7, is recovered. No other nodes are affected. The flows of \g and \m along edges are unchanged and the net object flows for all other nodes remain the same as in Chain 7. The recovery of the initial trade between counterparties - scaled to the volume of \g assigned to the chain - reflects that counterparty risk is unaffected by the rearrangement of trades.

{\small{

\[
\text{Chain 7: } NS_{g}\;
\tikz[baseline]{
  \draw[->] (0,0.1) -- (1,0.1) node[midway, above] {$2\g$};
  \draw[<-] (0,-0.1) -- (1,-0.1) node[midway, below] {$13.06\m$};
}
\; BT_{f}\;
\underset{\text{recovered trade}}{\underbrace{\Big[
    \tikz[baseline]{
      \draw[->] (0,0.1) -- (1,0.1) node[midway, above] {$2\g$};
      \draw[<-] (0,-0.1) -- (1,-0.1) node[midway, below] {$10.24\m$};
    }
    \Big]
  }}
\; BT_{i}\;
\tikz[baseline]{
  \draw[->] (0,0.1) -- (1,0.1) node[midway, above] {$2\g$};
  \draw[<-] (0,-0.1) -- (1,-0.1) node[midway, below] {$6.0\m$};
}
\; BT_{g}\;
\tikz[baseline]{
  \draw[->] (0,0.1) -- (1,0.1) node[midway, above] {$2\g$};
  \draw[<-] (0,-0.1) -- (1,-0.1) node[midway, below] {$13.06\m$};
}
\; NR_{f}
\]

Chain 7a: 
$NS_{g}$
\tikz[baseline]{
  \draw[->] (0,0.1) -- (1,0.1) node[midway, above] {2\g};
  \draw[<-] (0,-0.1) -- (1,-0.1) node[midway, below] {13.06\m};
}
$BT_{f}$

Chain 7b: 
$BT_{i}$
\tikz[baseline]{
  \draw[->] (0,0.1) -- (1,0.1) node[midway, above] {2\g};
  \draw[<-] (0,-0.1) -- (1,-0.1) node[midway, below] {6.0\m};
}
$BT_{g}$
\tikz[baseline]{
  \draw[->] (0,0.1) -- (1,0.1) node[midway, above] {2\g};
  \draw[<-] (0,-0.1) -- (1,-0.1) node[midway, below] {13.06\m};} $NR_{f}$

Chain 7c (recovered trade): 
$BT_{f}$
\tikz[baseline]{
  \draw[->] (0,0.1) -- (1,0.1) node[midway, above] {2\g};
  \draw[<-] (0,-0.1) -- (1,-0.1) node[midway, below] {10.24\m};
}
$BT_{i}$
}}

The example illustrates a general locality property of deficiency recovery: only the removed assignment is bilaterally reinstated, while the remaining assigned trades are merely repartitioned into residual chains. We state this formally in the following lemma.
\vskip5pt

\begin{lemma}[Locality of deficiency recovery]
Let $G$ be a chain or cycle produced by the TradeMech decomposition, and suppose node $v$ is deficient on the assigned trade $e$ on which $v$ is required to send its net object. Let $G'$ be the residual graph obtained by removing $e$ from $G$, reinstating $e$ as a bilateral contract between the same two original counterparties, and decomposing each nonempty connected component of $G'$ into a residual chain. Then:

\begin{enumerate}
    \item the removed assignment $e$ is recovered as a bilateral contract between the same two original counterparties as in the underlying initial contract from which $e$ was derived;
    \item every surviving assigned trade in the residual chains is a surviving fraction of the same initial bilateral trade from which it was derived before the deficiency;
    \item any node not incident to the removed assignment $e$ retains the same net contractual obligations on the residual chains as it had on $G$; and
    \item no new counterparty exposure is created for any node not incident to $e$.
\end{enumerate}
\end{lemma}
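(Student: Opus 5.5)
The plan is to work edge by edge, using the structure of $G$ produced in Section~\ref{subsec: Transform the TFN into chains and cycles}. Each edge of a chain or cycle is an assigned trade $(\lambda T_{ij}, \lambda T_{ij}M_{j\to i})$. By Figure~\ref{fig:Assignment of Trades to Chains and Cycles}, every such edge carries a fixed fraction $\lambda$ of the \g-flow and \m-payment of a unique initial netted contract between grandparents $i$ and $j$. The first step is to note that the removal-and-reinstatement operation touches only $e$. It deletes $e$ from the edge set and creates no new edges. It also does not rescale the remaining edges, since the flows on all surviving edges of Chain~7 in the example are unchanged. Part~(1) is then immediate. The reinstated contract is exactly the pair $(\lambda(ij)_G T_{ij}, \lambda(ij)_G T_{ij}M_{j\to i})$ between the same grandparents $i,j$ that defined $e$. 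Part~(2) follows because every edge of $G'$ is an edge of $G$ with unchanged label. Therefore it remains the same fraction $\lambda$ of the same initial trade. Decomposing $G'$ into connected components only groups these edges and does not modify any of them.

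The next step is to show that each nonempty component of $G'$ is itself a chain. A chain or cycle is a path, possibly closed, in which every node has degree at most two. Deleting one edge from a path leaves at most two subpaths. Deleting one edge from a cycle leaves a single path whose endpoints are the two endpoints of $e$. In each case the constant \g-volume property persists on every surviving edge. Hence each component is a path carrying constant flow, that is, a residual chain. It may not start at an $NS$ node or end at an $NR$ node. Chain~7b begins at $BT_i$, for example, so I would treat ``residual chain'' in the weakened sense of a simple path with constant \g-flow.

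For part~(3), I will use the fact that the net contractual obligation of a node $w$ on a netting group is determined solely by the edges incident to $w$. It equals the outflow minus the inflow of \m and \g, summed over those edges, as in the replacement-contract computation in Tables~\ref{tab:Net flows on Chain 7}--\ref{tab:Net flows on Cycle 1}. If $w$ is not an endpoint of $e$, then its incident edges in $G'$ are exactly its incident edges in $G$, with identical labels. So its net vector is unchanged, and it lies in a single component of $G'$. For part~(4), I will define a node's counterparty exposure as the set of assigned trades through which a default could cause it loss. Each such trade is a fraction of an initial bilateral contract with an original counterparty. By (2) and (3), a node not incident to $e$ keeps exactly the same incident assigned trades, each derived from the same initial counterparties. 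The reinstated bilateral contract involves only the endpoints of $e$. So no contract linking this node to a non-original counterparty is created.

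I expect the main obstacle to be making precise that the residual component really is a self-contained netting group. A node $w$ not incident to $e$ keeps its net obligations, but the component containing $w$ must still balance, with total net inflow equal to total net outflow. It also needs to be executable without help from the removed edge. The endpoints of $e$ absorb the change: their net obligations shift by exactly the flows on $e$, which are now settled bilaterally. This is why the deficient agent or its counterparty may need to post more to escrow. I would handle this by summing the per-edge conservation identity over each component: every edge contributes $+x$ at one end and $-x$ at the other, so each component nets to zero. This confirms that the residual chains are well-defined netting groups.
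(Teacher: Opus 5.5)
Your proposal is correct and follows essentially the same edge-by-edge argument as the paper. The operation deletes only $e$ and leaves every surviving assigned trade untouched, including its fraction and its bilateral ancestry, so a node not incident to $e$ keeps the same incident trades, the same net obligations and the same counterparties. Your additional observations go slightly beyond what the paper spells out but are sound: each component is a constant-flow path, a node not incident to $e$ has all its incident edges in a single component, and each component nets to zero. One justification should be changed: the claim that surviving edges are not rescaled should rest on the definition of the operation, which uses inherited assigned flows, rather than on the Chain~7 example.
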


\begin{proof}
By construction, each assigned trade on a chain or cycle is a fraction of an initial bilateral trade between the same two agents. When node $v$ is deficient, the mechanism removes exactly the assigned trade $e$ on which $v$ was required to send its net object. The removed assignment is then reinstated bilaterally between the same two original counterparties, which proves (1).

The mechanism does not alter the identity or size of any surviving assigned trade other than deleting $e$ from the affected chain or cycle. It only re-partitions the remaining connected components into one or more residual chains. Therefore every surviving assigned trade remains a fraction of the same initial bilateral trade as before, proving (2).

Now consider any node not incident to $e$. Since none of its incident assigned trades are changed or removed, the multiset of its surviving inflows and outflows on the residual chains is identical to its inflows and outflows on $G$. Hence its net contractual obligations are unchanged, which proves (3).

Finally, because the only removed-and-recovered trade is reinstated between the same original counterparties, and because all surviving assigned trades preserve their original bilateral ancestry, no node not incident to $e$ acquires a new counterparty. This proves (4). In particular, the recovery procedure changes only the contractual form of the removed assignment and the partition of the surviving assignments into residual chains; it does not alter the surviving assigned $M$-flows.
\end{proof}

\textbf{Margin payments} When the \ic represents a forward contract, contracting parties are subject to margin requirements which are periodic additions or subtractions from the escrow accounts of agents associated with the end-nodes. The margin obligations are typically based on changes in the value at risk ("VAR"). Denoting \g as a security and \m as money for this purpose, VAR is composed of the changes in market price of \g and market volatility from the immediate prior margin adjustment. When VAR increases, the net escrow requirement 
increases for the seller (and correspondingly decreases for the buyer). The opposite occurs when VAR decreases. Subject to uniform margining formulas, margin escrow on a chain is only paid by end-nodes, since the net flow of \g is zero for all intermediate nodes. There is no margin requirement on cycles, since the flow of \g is netted to zero for every node.

\textbf{Iteration} The protocol is applied to each new chain and repeated until there are no deficiencies. 

\textbf{Execution} Once a chain has no deficiencies the \mpc is executed and objects are sent to the receiving parties.

\subsection{Replacement Contract Invariance}
\label{subsec:Replacement Contract Invariance}

Proposition 2 states two key properties of the \tm. One is that agent profit is unaffected by the re-arrangement of contracts. The other is that the \mo protocol does not alter the counterparty exposure.
\vskip5pt

\begin{proposition}[Replacement-contract invariance]
At every stage of the mechanism, including after any sequence of deficiencies and recoveries under Algorithm~3, the following properties hold:
\begin{enumerate}
    \item for each agent, contractual profit, defined as net $M$-inflow, is equal to its net $M$-inflow under the initial contracts; and
    \item each agent's counterparty exposure remains with initial contract counterparties only. In particular, deficiency recovery is local: the removed assignment is reinstated bilaterally between the same two original counterparties, and any node not incident to the removed assignment acquires no new counterparty exposure and retains the same net contractual obligations on the residual chains.
\end{enumerate}
Thus, TradeMech preserves the location of counterparty risk.
\end{proposition}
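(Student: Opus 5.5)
The plan is an induction over the stages of the mechanism. The base stage is the initial decomposition into chains and cycles, and each inductive step is one deficiency-and-recovery event under Algorithm~3. I will carry an invariant $\mathcal{I}$ through every stage. $\mathcal{I}$ says that the current collection of contracts is a set of \emph{assigned trades}, held either in multiparty netting groups (chains and cycles) or as reinstated bilateral contracts. Each assigned trade is a pair $(\lambda T_{ij},\, \lambda T_{ij} M_{j\to i})$ between grandchildren of agents $i$ and $j$, where $i$ and $j$ are counterparties in a netted initial contract. Finally, for every ordered pair $(i,j)$ the fractions $\lambda$ over all live assigned trades sum to one. Both parts of the proposition will be read off $\mathcal{I}$.

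\textbf{Base case.} The flow steps (Algorithm~1 and Algorithm~2) never change the total $T$-flow from $i$ to $j$; they only partition it. Section~\ref{subsec:Attaching  M to chains and cycles} then attaches to each piece the initial unit price times the piece's volume. Together these give $\mathcal{I}$ with $\sum_h \lambda(h)_{ij} + \sum_y \lambda(y)_{ij} = 1$.

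\textbf{Inductive step.} Lemma~1 does the work. By parts (1) and (2), the removed assignment $e$ is reinstated between the same two original counterparties with the same volume and payment. Every surviving assignment keeps its identity, size, and bilateral ancestry. So the multiset of live assigned trades is unchanged, and only its grouping into contracts changes. Iterated re-netting, and the finite termination cited from NetWrap, only apply this step finitely many times, so $\mathcal{I}$ holds at every stage.

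\textbf{Part (1): profit.} An agent's net $M$-inflow in any netting group is the sum, over its incident assigned trades, of $M$ received minus $M$ paid. This is linear in the assigned trades, because netting within a group only aggregates these terms. The totals in Tables~\ref{tab:Net flows on Chain 7} and \ref{tab:Net flows on Cycle 1} illustrate this. Summing over all groups and bilateral contracts and regrouping by counterparty $k$ gives $\sum_k \bigl(\sum \lambda\bigr) T_{ki} M_{i\to k} - \sum_k \bigl(\sum \lambda\bigr) T_{ik} M_{k\to i}$. By $\mathcal{I}$ each inner sum of $\lambda$'s equals one, so this reduces to the right-hand side of equation~\eqref{eq:Agent $i$ profit}. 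The same computation with the $T$ terms shows net $T$-positions are also preserved.

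\textbf{Part (2): counterparty exposure.} I will define an agent's exposure as the set of agents whose failure to deliver can change what that agent receives or must reinstate. Under the recovery rule, a failure by $v$ affects only the assignment on which $v$ sends. That assignment becomes a bilateral contract with its original counterparty, and by Lemma~1(3)--(4) nodes not incident to it are untouched. Hence any loss lands only on an initial counterparty of $v$. The locality clause is then Lemma~1 restated, applied at each stage via the induction.

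The main obstacle is making ``exposure'' precise in part (2). A naive reading conflicts with multilateral netting: a middle node on a chain has its net $M$ routed through non-counterparties. The argument must rest on the escrow/locking rule, under which nothing is released until the group is executable, together with the rule that a default converts the affected assignment back into the original bilateral claim. I would first formalize exposure as ``the agent against whom a claim survives a default,'' and check that Lemma~1(1) delivers exactly this.
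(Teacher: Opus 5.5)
Your proposal is correct and follows essentially the same route as the paper's proof. The base stage comes from the profit identity (2) together with the fact that every edge on a chain or cycle is a fraction of an initial bilateral trade, and Lemma~1 supplies the inductive step over deficiency events: the removed assignment is reinstated bilaterally, surviving assignments keep their size and ancestry, and only the grouping into contracts changes. Your explicit invariant that the live fractions of each $T_{ij}$ sum to one, and your escrow-based formalization of ``exposure,'' make precise what the paper's proof only asserts directly.
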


\begin{proof}
For the initial replacement stage, part (1) follows from Equation~(2), which shows that the sum of an agent's net $M$-inflows across all chains and cycles equals its net $M$-inflow under the initial contracts. Part (2) holds initially because, by construction, each edge on each chain or cycle is a fraction of an initial bilateral trade between the same two agents.

Now suppose the proposition holds after $k$ deficiency events. Consider event $k+1$. The deficiency removes exactly one assigned trade from an affected chain or cycle and triggers the recovery procedure described in Section~3.1. By the Locality of Deficiency Recovery Lemma, the removed assignment is reinstated bilaterally between the same two original counterparties, every surviving assigned trade remains a surviving fraction of the same initial bilateral trade as before, and any node not incident to the removed assignment retains the same net contractual obligations and acquires no new counterparty exposure.

It follows that the mechanism preserves the identity of each agent's counterparties: no new counterparty pair is introduced, and all surviving or recovered obligations remain tied to original bilateral counterparties. Moreover, since the only change is the contractual form of the removed assignment and the repartition of the remaining assignments into residual chains, each agent's total net $M$-inflow remains unchanged. Therefore parts (1) and (2) continue to hold after event $k+1$. By induction, they hold after any sequence of deficiencies and recoveries.
\end{proof}

Proposition 2 preserves the identity of the counterparty bearing default risk. It does not require the gross magnitude of counterparty risk to remain unchanged. That magnitude may fall under the replacement contracts because direct netting of $T$ on chains eliminates pass-through exposure at intermediate nodes.

\subsection{Algorithm for Multiparty Contractual Obligations}
\label{subsec: Algorithm for Multiparty Contractual Obligations}

A netting group is a chain or cycle produced by Algorithm~2 together with its assigned $T$-flows, assigned $M$-flows, and replacement contract.

For a node $v$ in a netting group $G$, let $d_G(v)$ denote the delivery edge of $v$: the unique incident assigned trade on which $v$ is required to deliver its net object under the replacement contract for $G$. On a chain, $d_G(v)$ is the edge to the right if $v$ must send $T$, and the edge to the left if $v$ must send $M$. On a cycle, where $T$ nets to zero, $d_G(v)$ is the unique incident edge on which $v$ is designated to deliver its net $M$-obligation.

\begin{boxH}
\textbf{Algorithm 3: Replacement Procedure for Multiparty Contractual Obligations.}\\
\textbf{Input:} initial family $\mathcal{G}_0$ of chains and cycles produced by Algorithm~2.\\
\textbf{Output:} executable netting groups $\mathcal{X}$ and recovered bilateral contracts $\mathcal{B}$.

\begin{enumerate}
    \item Initialize the working set of netting groups by setting
    \[
    \mathcal{Q} \gets \mathcal{G}_0,
    \qquad
    \mathcal{X} \gets \varnothing,
    \qquad
    \mathcal{B} \gets \varnothing.
    \]

    \item While $\mathcal{Q} \neq \varnothing$, remove one netting group $G$ from $\mathcal{Q}$ and proceed as follows.

    \begin{enumerate}
        \item If $G$ is executable, add $G$ to $\mathcal{X}$.

        \item Otherwise, choose one deficient node $v$ in $G$, let
        \[
        e \gets d_G(v),
        \]
        and recover the assigned trade $e$ bilaterally between the same original counterparties. Add that recovered bilateral contract to $\mathcal{B}$.

        \item Form the residual graph
        \[
        G^{-} \gets G \setminus \{e\}.
        \]

        \item For each connected component $H$ of $G^{-}$ that contains at least one edge, treat $H$ with its inherited edge order and inherited assigned flows as a residual chain, and add $H$ to $\mathcal{Q}$.
    \end{enumerate}

    \item When $\mathcal{Q}$ is empty, return $(\mathcal{X}, \mathcal{B})$.
\end{enumerate}
\end{boxH}

Algorithm~3 is a local edge-removal procedure. Each deficiency removes exactly one assigned trade: the trade on which the deficient node was required to deliver its net object. That removed trade is recovered bilaterally between the same original counterparties, while the remaining assigned trades are unchanged and are merely repartitioned into residual chains. Deleting one edge from a chain yields at most two residual chains; deleting one edge from a cycle yields one residual chain. The procedure is then repeated on the residual chains until all remaining netting groups are executable.

\section{Single Traded Object}
\label{sec:Single_Traded_Object}

When there is only one object that is traded between counterparties, for example in a derivatives trade where only one party has a margin obligation, the fourth column from Table \ref{tab:Initial Contracts} is removed, which eliminates object $\m$. The protocol remains the same as before except that there is no re-attachment of \m to chains and cycles.\footnote{The restriction of flow to a single object eliminates obligations to deliver objects on cycles.} As an example, Chain 7 becomes:

{\small{
Chain 7: $NS_{g}$ 
\tikz[baseline]{
  \draw[->] (0,0.1) -- (1,0.1) node[midway, above] {2\g};}
$BT_{f}$ 
\tikz[baseline]{
  \draw[->] (0,0.1) -- (1,0.1) node[midway, above] {2\g};}
$BT_{i}$ 
\tikz[baseline]{
  \draw[->] (0,0.1) -- (1,0.1) node[midway, above] {2\g};}
$BT_{g}$ 
\tikz[baseline]{
  \draw[->] (0,0.1) -- (1,0.1) node[midway, above] {2\g};}
$NR_{f}$
}}

Netting results in $NS_{g}$ being the only node that contributes the object \g, and node $NR_{f}$ is the only receiver of the object. In aggregate, it is only the net sending amounts that are required to be sent. The algorithm for decomposing multiparty contracts in Section \ref{subsec: Algorithm for Multiparty Contractual Obligations} continues to apply. a failure to send \g by node $NS_{g}$ causes the chain to decompose into a bilateral contract for the delivery of 2 units of \g from $g$ to $f$.\footnote{The remainder of the chain becomes a cycle where there is no required delivery of \g.}

 \section{Conclusion}
\label{sec: Conclusion}

Existing methods used to multilaterally net trades in a market alter counterparty risk. Trade compression can alter counterparties, and central clearing imposes a CCP as a new counterparty to every trade. We presented a trading mechanism that achieves maximal multilateral netting without altering counterparty credit risk. The mechanism does so by assigning trades to chains and cycles and replacing initial bilateral contracts with multilateral contracts on each chain and cycle. Trades assigned to chains and cycles are netted, which achieves maximal multilateral netting of the designated netted object. At the same time, the replacement contracts preserve each agent's contractual profit and preserve the location of counterparty risk, because all assigned trades remain fractions of initial bilateral trades between the same counterparties.

When a party fails to pre-commit the objects it is required to send, the affected trade assignment is removed and recovered as a bilateral contract between the same original counterparties. The remaining trade assignments are re-netted on re-formed chains, and no new counterparty exposure is created. A key insight is that the transformation into chains and cycles is achieved without altering the edge connections between agents, either in terms of counterparties or in terms of the aggregate volume of traded objects flowing between them. This invariance enables recovery of the initial bilateral obligation when a party fails to perform, thereby preserving counterparty exposures while maintaining multilateral netting of the remaining trades. The same logic extends to the case in which only a single object is traded. 
\bibliographystyle{plainnat}

\end{document}